\documentclass[USenglish]{lipics}

\usepackage{graphicx}
\usepackage{amssymb, amsmath}

\newcommand{\COMMENT}[1]{}

\bibliographystyle{abbrv}

\newtheorem*{theorem*}{Theorem}

\newtheorem{proposition}{Proposition}

\newcommand{\R}{\mathbb R}
\newcommand{\N}{\mathbb N}

\newcommand{\ch}{\text{ch}}
\newcommand{\conv}{\text{ch}}
\newcommand{\lift}{\text{lift}}
\newcommand{\proj}{\text{proj}}

\newcommand{\eps}{\varepsilon}

\bibliographystyle{abbrv}

\begin{document}

\title{Erd\H os-Szekeres and Testing Weak $\eps$-Nets are NP-hard in $3$ dimensions - and what now?}
\titlerunning{}

\author[1]{Christian Knauer}
\author[2]{Daniel Werner\footnote{This research was funded by Deutsche Forschungsgemeinschaft within the Research Training Group (Graduiertenkolleg) ``Methods for Discrete Structures''.}}
\affil[1]{Institut f\"ur Informatik,  Universit\"at Bayreuth, Germany 
\texttt{christian.knauer@uni-bayreuth.de}
}
\affil[2]{Institut f{\"ur} Informatik, Freie Universit{\"a}t Berlin, Germany, 
\texttt{ daniel.werner@fu-berlin.de}}
\authorrunning{}

\date{\today}                                      

\maketitle

\begin{abstract}
The Erd\H os-Szekeres theorem states that, for every $k$, there is a number $n_k$ such that every set of $n_k$ points in general position in the plane contains a subset of $k$ points in convex position. If we ask the same question for subsets whose convex hull does not contain any other point from the set, this is not true: as shown by Horton, there are sets of arbitrary size that do not contain an empty $7$-gon.

These questions have also been studied extensively from a computational point of view, and polynomial time algorithms for finding the largest (empty) convex set have been given for the planar case. In higher dimension, it was not known how to compute such a set efficiently. In this paper, we show that already in dimension $3$ no polynomial time algorithm exists for determining the largest (empty) convex set (unless P=NP), by proving that the corresponding decision problem is NP-hard. This answers a question by Dobkin, Edelsbrunner and Overmars from 1990.

As a corollary, we derive a similar result for the closely related problem of testing weak $\eps$-nets in $\R^3$. Answering a question by Chazelle et al. from 1995, our reduction shows that the problem is co-NP-hard.
\end{abstract}

\section{Preliminaries}

The Erd\H os-Szekeres theorem \cite{ES35} is one of the major theorems from combinatorial geometry and one of the earliest results in geometric Ramsey theory.
\begin{theorem*} \textnormal{(Erd\H os and Szekeres, 1935)} For every $k$ there is a number $n_k$ such that every planar set of $n_k$ points in general position contains $k$ points in convex position.
\end{theorem*}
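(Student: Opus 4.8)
The plan is to deduce the theorem from Ramsey's theorem together with a small base case. The reformulation I would use is a Carath\'eodory-type characterization of convex position: a finite planar set in general position is in convex position if and only if \emph{every} one of its four-element subsets is in convex position. The forward direction is trivial; for the converse, if a set $S$ is not in convex position then some point $p \in S$ lies in the convex hull of $S \setminus \{p\}$, and by Carath\'eodory's theorem in the plane $p$ lies in the triangle spanned by three other points of $S$, so those four points fail to be in convex position. This reduces convex position to a property living entirely on four-tuples, which is exactly what Ramsey's theorem for uniform hypergraphs is designed to control.

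First I would establish the base case, due to Esther Klein: every set of five points in general position contains four in convex position. I would prove this by a short case analysis on the size of the convex hull of the five points. If the hull is a pentagon or a quadrilateral we are immediately done; the only remaining case is a triangle $abc$ containing the other two points $d,e$ in its interior, and here the line through $d$ and $e$ leaves two of $a,b,c$ strictly on the same side, and those two together with $d,e$ form a convex quadrilateral. This finite, elementary verification is the only genuinely geometric content of the argument, and I expect it to be the main obstacle; everything else is a black-box invocation.

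With these two ingredients in hand I would finish by invoking Ramsey's theorem. Color each four-element subset of an $n$-point set in general position by one of two colors according to whether it is in convex position or not. By Ramsey's theorem for $4$-uniform hypergraphs there is a threshold $n_k$ so that any $n \geq n_k$ points contain a $k$-subset all of whose four-tuples receive the same color. If that color is ``convex'', the characterization above shows the entire $k$-subset is in convex position and we are done. For $k \geq 5$ the color cannot be ``non-convex'': such a monochromatic set would contain five points no four of which are in convex position, contradicting Klein's base case. Hence the monochromatic set must be convex, which proves the theorem.

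I would close with a remark on quantitative bounds. The Ramsey-based argument above yields only tower-type (astronomically large) values of $n_k$, since it inherits the bounds from hypergraph Ramsey numbers. The original cups-and-caps argument of Erd\H os and Szekeres gives the far better bound $n_k \leq \binom{2k-4}{k-2} + 1$, at the cost of a more delicate decomposition of point sequences into monotone ``cups'' and ``caps''; I would mention this as the route to take if sharp estimates were needed, though for the qualitative statement the Ramsey proof is the cleanest.
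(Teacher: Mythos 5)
Your proposal is correct, but there is nothing in the paper to compare it against: the paper does not prove the Erd\H os-Szekeres theorem at all. It is quoted as classical background, with a citation to the original 1935 paper \cite{ES35}, and is then used only as motivation (and, later, to observe that the \textsc{Erd\H os-Szekeres} decision problem is trivially fixed-parameter tractable). So the relevant comparison is with the literature rather than with the paper, and there your argument is essentially Erd\H os and Szekeres's own first proof: the Ramsey-theoretic one. All three ingredients check out. The four-tuple characterization is valid --- Carath\'eodory puts a non-extreme point $p$ in the convex hull of at most three others, and general position rules out $p$ lying on a segment, so $p$ is strictly inside a triangle, giving a bad four-tuple. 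Klein's lemma is correctly reduced to the triangle-plus-two-interior-points case, where the line through the two interior points misses all three vertices (general position) and leaves two of them on one side; the resulting four points are in convex position because the two triangle vertices are extreme in the whole five-point set, and neither interior point can lie in the triangle of the other three without forcing the line through them to meet the far side. The Ramsey step for $4$-uniform hypergraphs then works exactly as you say for $k \geq 5$, with the ``all non-convex'' color killed by Klein's lemma; the cases $k \leq 4$ are handled directly by Klein's lemma and triviality, which you might state explicitly for completeness, though this is cosmetic. Your closing remark on bounds is also accurate: the Ramsey route gives tower-type values of $n_k$, whereas the cups-and-caps argument gives $n_k \leq \binom{2k-4}{k-2}+1$, which is the bound one would cite if quantitative information mattered --- it does not for this paper, whose hardness results concern fixed dimension $3$ and use the theorem only as a black box.
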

Exact values of $n_k$ are known only for very few cases and subject to extensive research, also for the higher dimensional cases.

A closely related question is the following: is the theorem still true if we ask for sets whose convex hull is empty, i.e., does not contain any other point from the set? That this is not the case was shown by Horton \cite{Ho83}: in the plane there are arbitrary large sets which do not contain empty $7$-gons. Nicol\'as \cite{Nic07} and Gerken \cite{Ger08} independently solved the long standing open problem whether or not there is always an empty $6$-gon.

Both these questions generalize to dimension larger than $2$ in the obvious way, and clearly the numbers $n_k$ do not increase when the dimension gets larger (proof: project to $\R^2$). See the surveys by B\'ar\'any and K\'arolyi \cite{BK01} or Morris and Soltan \cite{MS00} for further references and (more or less) recent progress on the subject.

The corresponding computational problems have also received a lot of attention in the past (e.g., \cite{AR85}, \cite{CK80}, \cite{DEO90}, \cite{MRSW95}). Polynomial time algorithms are known for both problems in the plane. The fastest algorithm is given in \cite{DEO90}, and the question is stated whether a polynomial time algorithm for determining the largest empty convex set also exists in $\R^3$.

\subsection{Our results}
In this paper, we will consider the following decision problems:
\begin{definition} \textsc{(Erd\H os-Szekeres)} Let $P$ be a set of points in $\R^d$ and $k \in \N$. Is there a set $Q \subseteq P$ of $k$ points in convex position?
\end{definition}
and
\begin{definition} \textsc{(Largest-Empty-Convex-Set)} Let $P$ be a set of points in $\R^d$ and $k \in \N$. Is there a set $Q \subseteq P$ of $k$ points in convex position whose convex hull does not contain any other point from $P$?
\end{definition}

Using the reduction technique from Giannopoulos et al. \cite{GKWW11arxiv}, it is a mere exercise to show that both problems are NP-hard if the dimension is not fixed. For people familiar with parameterized complexity: the problem is even W[1]-hard with respect to the dimension $d$. This means that it is very unlikely to admit an algorithm with running time $O\left(f(d) n^c\right)$ for \emph{any} computable function $f$ and constant $c$.

Still, this does not exclude the possibility that in every fixed dimension, the problem can be solved with a running time of, say, $O(n^{d+1})$. In this paper, we show that this cannot be the case (under standard complexity theoretic assumptions):
\begin{theorem}\label{Thm:Main} The problems \textsc{Largest-Empty-Convex-Set} and \textsc{Erd\H os-Szekeres} are NP-hard in $\R^3$.
\end{theorem}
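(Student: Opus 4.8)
The plan is to reduce from \planarsat, which is NP-hard, to \textsc{Largest-Empty-Convex-Set} in $\R^3$; a minor modification should then settle \textsc{Erd\H os-Szekeres} as well. Given a planar 3-CNF formula $\phi$ together with a planar embedding of its variable--clause incidence graph, I would construct in polynomial time a point set $P \subseteq \R^3$ and a threshold $k$ so that $P$ admits an empty convex subset of size at least $k$ if and only if $\phi$ is satisfiable. The guiding idea is to realize most of $P$ as a rigid \emph{frame} of points that is forced into every large convex set, and to encode the Boolean logic in small local \emph{gadgets} hung off the frame, exploiting the extra freedom of the third dimension to make local binary choices that cannot interfere with one another (consistent with the fact that the planar case is polynomial, so the third coordinate must be doing essential work).

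Concretely, I would lift the given planar embedding onto a convex surface, say the paraboloid via the standard lift $\lift$, so that the generic frame points are automatically in convex position, and then install three kinds of gadgets along the embedding. A \emph{variable gadget} for each $v \in \var$ offers exactly two ways to extend the frame by a convex-position point, one encoding $\true$ and one $\false$, where selecting one branch geometrically blocks the other (it pushes the rival point inside the current hull, or destroys the emptiness of that hull). A \emph{wire gadget} $\wire$ threads a chain of such forced choices along an edge of the embedding, so that the convexity/emptiness constraint propagates a single consistent truth value from a variable to each clause it feeds. A \emph{clause gadget} for each $c \in \clause$ contributes one extra point to an empty convex set precisely when at least one of its three incident literals arrives in the satisfying state, the three incoming wires being arranged so that the satisfying incoming configurations are exactly the ones leaving room for the clause point on the hull with empty interior.

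With $k$ taken as the number of frame points plus one per variable plus one per clause, a satisfying assignment yields an empty convex set of exactly this size: pick in each variable gadget the branch matching the assignment, let the wires carry it, and collect every clause point. Conversely, any empty convex set of size $k$ must use all frame points and hence exactly one branch per variable (the two branches being mutually exclusive in any convex set), must be consistent along every wire, and must collect all $|\clause|$ clause points, which is possible only if every clause receives a satisfying literal, so $\phi$ is satisfiable. These two directions are the formal core of the argument.

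The main obstacle will be \textbf{soundness}: ruling out any ``cheating'' empty convex set that reaches $k$ without corresponding to a genuine assignment, for instance by mixing the $\true$ branch of one gadget with stray points of a neighbour, or by exploiting a global, non-local convex combination across distant gadgets. Controlling this requires a quantitative placement in which each gadget is \emph{convexly independent} of the rest: the frame must dominate $\ch(P)$ strongly enough that the only feasible interactions are the intended local ones, while all coordinates stay of size $\poly(|\phi|)$ and the whole set remains in general position. Finally, I would verify that the same frame-plus-gadget construction, read with ``$k$ points in convex position'' in place of ``empty convex set,'' proves NP-hardness of \textsc{Erd\H os-Szekeres}, so that a single reduction disposes of both problems.
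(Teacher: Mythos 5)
Your proposal is a plan, not a proof: everything that would make it work is deferred. The variable, wire, and clause gadgets are never constructed, and you yourself identify the missing step --- soundness, i.e., ruling out ``cheating'' convex sets --- without resolving it. That step is not a routine verification; it is the entire technical content of the theorem. Worse, there is a geometric obstruction your sketch glosses over: \emph{any} set of points lying on the paraboloid is automatically in empty convex position (no lifted point is ever in the convex hull of other lifted points), so a mutual-exclusion mechanism of the kind your variable gadget needs --- ``selecting one branch pushes the rival point inside the hull'' --- cannot be realized by frame points on the lift alone. You must introduce points \emph{off} the paraboloid that act as blockers, and then the crux is to guarantee that each such blocker interacts only with its two intended points and with nothing else in the configuration, across all of $\ch(P)$. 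Your sketch names this requirement (``convexly independent'' gadgets) but gives no construction achieving it, and with a \planarsat{} source the wires and clause fan-ins make this locality genuinely delicate to enforce.

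The paper sidesteps all of this by reducing from a different problem: independent set in a family of pairwise non-overlapping unit disks (NP-hard by Cerioli et al.). Lift the disk centers to the paraboloid, and for each pair of touching disks $d, d'$ add one blocking point at the midpoint of the two lifted centers. The single key lemma is that this midpoint lies in $\ch(Q)$ for $Q$ a set of lifted centers \emph{iff both} $\lift(c_d)$ and $\lift(c_{d'})$ lie in $Q$; its proof takes the circle centered at the midpoint's projection through $c_d$ and $c_{d'}$, notes that the non-overlapping unit-disk hypothesis forces this circle to contain no other center, and lifts it to a separating hyperplane. That is exactly the locality property you would have to engineer by hand, obtained for free from the choice of source problem --- no frame, no wires, no clause gadgets, and the count is simply $m + |B|$. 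The Erd\H os--Szekeres variant then needs its own argument (an exchange step showing any convex set of size $m+|B|$ can be modified to contain all of $B$), which your last sentence also waves off as a verification. If you want to salvage your approach, the honest path is to abandon \planarsat{} and look for a source problem whose structure matches what midpoint-blocking on the paraboloid can express --- which is precisely what the paper does.
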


The first part of the theorem, hardness of \textsc{LECS}, is shown in Sec. \ref{Sec:Reduction}. In Sec. \ref{Sec:Adaption}, the proof is adapted to \textsc{ES}. In Sec. \ref{Sec:WeakEps}, we derive a similar result for testing weak $\eps$-nets and red-blue discrepancy. Finally, in Sec. \ref{Sec:Conclusion}, we make several suggestions for further research on the subject.

\section{The reduction}\label{Sec:Reduction}
We will show that the problems is NP-hard by a reduction from the following problem:
\begin{definition} \textsc{(Independent-Set-for-Nonoverlapping-Unit-Disks)} Given a set of pairwise non-overlapping unit disks in $\R^2$, decide whether there are $k$ disks such that no two of them touch.
\end{definition}
Here, non-overlapping means that the interiors of the disks are pairwise disjoint. As shown by Cerioli et al. \cite{CFFP04}, the problem \textsc{ISNUD} is NP-hard. We will now reduce this problem to \textsc{LECS} and show how to adapt it to \textsc{ES} in the next section.

For a given instance $\cal D$ of unit disks in the plane, we will create a set of points in $\R^3$. These points will \emph{almost} lie on the elliptic paraboloid, in a sense to be made precise later.

For a point $x = (x_1, x_2) \in \R^2$, let \[ \lift\colon (x_1, x_2) \to (x_1, x_2, x_1^2 + x_2^2) \] denote the standard lifting transform to the paraboloid.

Let $D_c$ denote the $n$ centers of the disks in $\cal D$. Let $L$ denote the set of all points $\hat x = \lift (x)$, for $x \in D_c$.

We now want to forbid certain pairs of points to lie in empty convex positions, namely those for which the corresponding disks intersect. Thus, for a pair of intersecting disks $d, d'$ and their centers $c_d, c_{d'}$, we add a blocking point \[ b_{d d'} = \frac{1}{2} \left( \lift (c_{d'}) + \lift(c_d) \right). \] The set $B$ then consist of all the points $\{ b_{dd'} \mid d \cap d' \neq \emptyset \}$, and we set $P = L \uplus B$.

Thus, we have created $O(|\cal D|)$ points and the reduction is linear in the input size. The main property of the reduction is captured in the folowing lemma.

\begin{proposition}\label{Proposition:Separation} Let $Q$ be a set of points and $h$ be a hyperplane such that $hx \geq 0$ for all $x \in Q$. Then a point $x$ is in $\ch(Q)$ if and only if it is in $\ch(Q \cap h)$.
\end{proposition}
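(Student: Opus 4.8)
The plan is to treat the two implications separately, with the forward (``only if'') direction carrying all the content. I read $h$ as an affine functional whose zero set is the hyperplane, so that ``$hx \geq 0$'' places $x$ in the closed nonnegative halfspace and ``$x \in h$'' means $hx = 0$. The equivalence is the standard fact that a supporting hyperplane cuts out exactly the convex hull of the points it contains, and it is applied to points $x$ lying on $h$; this is the case I prove.

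The ``if'' direction is immediate: since $Q \cap h \subseteq Q$ and the convex hull is monotone under inclusion, $\ch(Q \cap h) \subseteq \ch(Q)$, so any $x \in \ch(Q \cap h)$ already lies in $\ch(Q)$. The sign hypothesis is not needed here.

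For the ``only if'' direction, suppose $x \in \ch(Q)$ with $hx = 0$. By definition of the convex hull, $x = \sum_i \lambda_i q_i$ is a finite convex combination with each $q_i \in Q$, $\lambda_i \geq 0$, and $\sum_i \lambda_i = 1$. Applying $h$ and using that an affine functional respects convex combinations, I get $0 = hx = \sum_i \lambda_i (h q_i)$. Each summand $\lambda_i (h q_i)$ is nonnegative, since $\lambda_i \geq 0$ and $h q_i \geq 0$ by hypothesis, yet they sum to $0$; hence every summand vanishes. Thus every $q_i$ with $\lambda_i > 0$ satisfies $h q_i = 0$, i.e.\ $q_i \in Q \cap h$, and dropping the zero-weight terms exhibits $x$ as a convex combination of points of $Q \cap h$. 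Therefore $x \in \ch(Q \cap h)$.

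The single genuinely load-bearing step is the sign argument above: it is precisely the assumption that \emph{all} of $Q$ lies weakly on the nonnegative side of $h$ that forbids any point strictly inside the positive halfspace from contributing positive weight to a combination landing on $h$. Everything else --- monotonicity of $\ch$ and the fact that hull membership means being a finite convex combination --- is routine, so I do not expect any real obstacle.
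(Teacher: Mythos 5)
Your proof is correct --- and, in fact, the paper offers no proof of this proposition at all: it is stated bare and invoked as a standard fact in the proof of Lemma \ref{Lemma:Encoding}, so your write-up supplies an argument the authors chose to omit. The sign argument you give (evaluate the affine functional on a convex combination, note that nonnegative terms summing to zero must all vanish) is the canonical proof that a supporting hyperplane cuts out a face of the convex hull, so there is no methodological divergence to report, only a gap filled. One detail you handled well and that deserves emphasis: the proposition as literally stated is false without the implicit hypothesis that the point $x$ lies on $h$ (i.e., $hx = 0$); for instance, if all of $Q$ lies strictly inside the positive halfspace, then $Q \cap h = \emptyset$ while $\ch(Q)$ is nonempty. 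You made this hypothesis explicit and proved the statement under it, which is exactly the situation in which the paper uses the proposition --- there the point in question is the blocking point $b_{dd'}$, which lies on the supporting hyperplane through $\hat c_d$ and $\hat c_{d'}$ --- so your reading matches the intended application.
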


\begin{lemma}\label{Lemma:Encoding} A blocking point $b_{dd'}$ is contained in the convex hull of a set $Q \subseteq L$ if and only if $\hat c_d$ and $\hat c_{d'}$ are contained in $Q$.
\end{lemma}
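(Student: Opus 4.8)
The plan is to prove the two directions separately, with the forward direction being essentially immediate and the reverse direction resting on the standard dictionary between nonvertical hyperplanes in $\R^3$ and circles in $\R^2$, combined with Proposition~\ref{Proposition:Separation}. Throughout I use that, since the disks are non-overlapping unit disks, two disks $d,d'$ with $d\cap d'\neq\emptyset$ touch in a single point, so $|c_d-c_{d'}|=2$ and the touching point is the midpoint $m=\frac12(c_d+c_{d'})$.

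For the easy direction, suppose $\hat c_d,\hat c_{d'}\in Q$. Since $b_{dd'}=\frac12(\hat c_d+\hat c_{d'})$ is by definition the midpoint of the segment joining these two points, it lies in $\ch\{\hat c_d,\hat c_{d'}\}\subseteq\ch(Q)$, and there is nothing more to do.

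For the reverse direction I would exhibit a hyperplane $h$ that isolates $\hat c_d$ and $\hat c_{d'}$ among all lifted centers. Recall the correspondence from the lifting transform: a nonvertical hyperplane meets the paraboloid in the lift of a circle $C\subseteq\R^2$, a center $c$ lies on $C$ iff $\lift(c)$ lies on the hyperplane, and $c$ lies outside $C$ iff $\lift(c)$ lies strictly above it. I would take $C$ to be the circle of radius $1$ centred at $m$, i.e. the circle having $c_dc_{d'}$ as a diameter, and orient the associated hyperplane $h$ so that $hx\geq 0$ on its upper side. By construction $\hat c_d,\hat c_{d'}\in h$, and hence $b_{dd'}=\frac12(\hat c_d+\hat c_{d'})\in h$ as well.

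The crux is to show that every other center lies \emph{strictly} outside $C$, so that $L\cap h=\{\hat c_d,\hat c_{d'}\}$; this is the only step where I expect real work, and it is exactly where the touching condition is used. Normalising $c_d=(-1,0)$, $c_{d'}=(1,0)$, $m=\origin$, suppose some other center $c_e=(x,y)$ satisfied $|c_e-m|\leq 1$, i.e. $x^2+y^2\leq 1$. Non-overlap gives $|c_e-c_d|\geq 2$ and $|c_e-c_{d'}|\geq 2$. From $4\leq(x+1)^2+y^2\leq 2+2x$ the first inequality forces $x\geq 1$, while the symmetric estimate forces $x\leq -1$; these are incompatible on $-1\leq x\leq 1$, so no such $c_e$ exists and every other center is strictly outside $C$. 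Finally, applying Proposition~\ref{Proposition:Separation} to $Q\subseteq L$ (so $hx\geq 0$ for all $x\in Q$) and the point $b_{dd'}\in h$ yields that $b_{dd'}\in\ch(Q)$ iff $b_{dd'}\in\ch(Q\cap h)$. Since $Q\cap h\subseteq L\cap h=\{\hat c_d,\hat c_{d'}\}$ and the midpoint $b_{dd'}$ lies in the convex hull of a subset of $\{\hat c_d,\hat c_{d'}\}$ only when both points are present, $b_{dd'}\in\ch(Q)$ forces $\hat c_d,\hat c_{d'}\in Q$, completing the proof.
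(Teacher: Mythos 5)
Your proof is correct and takes essentially the same route as the paper: the forward direction is by definition of the midpoint, and the reverse direction constructs the hyperplane meeting the paraboloid in (the lift of) the radius-$1$ circle centred at the touching point, then invokes Proposition~\ref{Proposition:Separation}. The only difference is that you spell out the coordinate computation showing no other center lies on or inside that circle, a fact the paper simply asserts from the non-overlapping unit-disk hypothesis.
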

\begin{proof} $\Leftarrow$: by definition\\
$\Rightarrow$: We show that there is a hyperplane that contains $b_{dd'}$, $\hat c_d$, and $\hat c_{d'}$ and has all other points strictly on the positive side. Here we will make use of the fact that our instance consists of non-overlapping unit disks - otherwise, the claim would not hold.

Let $C$ be the circle with center $\proj(b_{dd'})$ through $c_d$ and $c_{d'}$. Because all disks are non-overlapping unit disks, this circle does not contain any other points from $D_c$. We then take as $h$ the unique hyperplane whose intersection with the paraboloid projects to the circle $C$. This hyperplane contains all three points, and because the $C$ did not contain any points from $D_c$, all other points from $L$ and thus $B$ lie strictly above $h$. The claim then follows from Proposition \ref{Proposition:Separation}.
\end{proof}
The following states that whether or not a set is in empty convex position will depend only on which points we choose from $L$. The set $B$ can always be added without destroying this property.

\begin{proposition}\label{Observation:Convex} The sets $L$ and $B$ each are in empty convex position, and $\conv(L) = \conv(L \cup B)$. 
\end{proposition}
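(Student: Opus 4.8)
The plan is to prove the three assertions in turn, the first two being essentially immediate from the lifting construction while the third is where the unit-disk hypothesis genuinely enters. I would first settle $\conv(L)=\conv(L\cup B)$: each blocking point $b_{dd'}=\tfrac12(\lift(c_d)+\lift(c_{d'}))$ is the midpoint of two points of $L$, so $b_{dd'}\in\conv(\{\lift(c_d),\lift(c_{d'})\})\subseteq\conv(L)$; hence $B\subseteq\conv(L)$, giving $\conv(L\cup B)\subseteq\conv(L)$, and the reverse inclusion is trivial. For the convex position of $L$ I would invoke the standard property of the lift: every point of $L$ lies on the strictly convex paraboloid $\{z=x_1^2+x_2^2\}$, so the tangent plane at $\lift(x)$ supports $\conv(L)$ and meets the paraboloid only in $\lift(x)$. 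Thus no point of $L$ lies in the convex hull of the others, i.e.\ $L$ is in (empty) convex position.

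The crux is the convex position of $B$, which I would obtain by showing that all blocking points lie on a single strictly convex surface. Writing $p=c_d$ and $q=c_{d'}$, we have $b_{dd'}=\bigl(\tfrac{p+q}{2},\,\tfrac{|p|^2+|q|^2}{2}\bigr)$, whose projection is the midpoint $m=\tfrac{p+q}{2}$, and the height of $b_{dd'}$ exceeds the height $|m|^2$ of $\lift(m)$ by exactly $\tfrac{|p|^2+|q|^2}{2}-\tfrac{|p+q|^2}{4}=\tfrac{|p-q|^2}{4}$. Since $\mathcal D$ consists of non-overlapping unit disks and a blocking point is introduced only for a touching pair $d\cap d'\neq\emptyset$, the two centres lie at a fixed distance $r$ (twice the common radius), so this offset equals the constant $r^2/4$ for every blocking point. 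Hence $B$ lies on the vertically translated paraboloid $\{z=x_1^2+x_2^2+r^2/4\}$, which is again strictly convex, and the tangent-plane argument from the previous step applies verbatim to show that $B$ is in convex position.

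I expect the main obstacle to be exactly this last observation --- recognizing that the uniform inter-centre distance forced by the non-overlapping-unit-disk assumption places all of $B$ on one common translated paraboloid. This is the same structural feature that was essential in Lemma~\ref{Lemma:Encoding}, and as remarked there the statement would fail for disks of varying radii; once the constant distance $r$ is identified, the height computation and the convex-position conclusion are routine.
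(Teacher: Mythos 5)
Your proof is correct, but the argument you give for the convex position of $B$ --- the crux, as you rightly identify --- takes a genuinely different route from the paper's. The paper handles $L$ and the identity $\conv(L)=\conv(L\cup B)$ exactly as you do, but for $B$ it simply recycles the separating hyperplane constructed in the proof of Lemma \ref{Lemma:Encoding}: for each blocking point $b_{dd'}$, the hyperplane through the lifted circle centered at $\proj(b_{dd'})$ contains $b_{dd'}$ and leaves every other point of $L$ strictly above, hence also every other point of $B$ (each being a midpoint of two points of $L$, neither below that hyperplane and at least one strictly above); so no point of $B$ lies in the hull of the others. You instead observe that the non-overlapping unit-disk hypothesis forces every pair generating a blocking point to have centers at distance exactly $2$, so that all of $B$ lies on the single translated paraboloid $\{z=x_1^2+x_2^2+1\}$, and you then apply to $B$ the same strict-convexity (tangent-plane) argument used for $L$; your offset computation $\tfrac{|p|^2+|q|^2}{2}-\tfrac{|p+q|^2}{4}=\tfrac{|p-q|^2}{4}$ is correct. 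Both arguments ultimately rest on the same structural fact about touching non-overlapping unit disks, but they package it differently: the paper's version is a one-liner given the earlier lemma and exhibits, for each $b\in B$, a hyperplane separating $b$ from \emph{all} other points of $P$; yours is self-contained, treats $L$ and $B$ symmetrically by placing each on a strictly convex surface, and has the pleasant side effect of showing at once that $\conv(B)$ contains no point of $L$, since $\conv(B)$ lies in the region $z\ge x_1^2+x_2^2+1$, strictly above the paraboloid carrying $L$.
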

\begin{proof} By construction, all points of $L$ lie on the paraboloid. The points from $B$ can be separated from each other by the hyperplane defined in the previous proof. As all of them are convex combinations of points in $L$, we have $\ch(B) \subseteq \ch(L)$.
\end{proof}

\begin{corollary} A set $S = L' \uplus Q' \subseteq P$ is in convex position if and only if no point of $Q' \subseteq Q$ is contained in the convex hull of $L' \subseteq L$.
\end{corollary}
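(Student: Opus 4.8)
The plan is to use the standard characterization that $S$ is in convex position precisely when no point of $S$ lies in the convex hull of the remaining points of $S$, and to verify this vertex by vertex, treating the two implications and the two types of points in $S = L' \uplus Q'$ separately. One implication is immediate: if some blocking point $q \in Q'$ lies in $\ch(L')$, then since $L' \subseteq S \setminus \{q\}$ we get $q \in \ch(S \setminus \{q\})$, so $q$ is not a vertex of $\ch(S)$ and $S$ fails to be in convex position. This is exactly the contrapositive of the ``only if'' direction.

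For the converse, I would assume that no point of $Q'$ lies in $\ch(L')$ and show that every point of $S$ is a vertex of $\ch(S)$. Consider first a lifted center $\hat c \in L'$. By Proposition \ref{Observation:Convex} we have $\ch(S) \subseteq \ch(L \cup B) = \ch(L)$, and $L$ is in empty convex position, so $\hat c$ is a vertex of $\ch(L)$. A supporting hyperplane of $\ch(L)$ meeting $\ch(L)$ only in $\hat c$ then strictly separates $\hat c$ from every other point of $S \setminus \{\hat c\} \subseteq \ch(L)$, whence $\hat c \notin \ch(S \setminus \{\hat c\})$.

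The main work concerns the blocking points. For $q = b_{dd'} \in Q'$, I would invoke the hyperplane $h$ constructed in the proof of Lemma \ref{Lemma:Encoding}, oriented so that $hx \ge 0$ for all $x \in P$; by that construction the only points of $P$ lying on $h$ are $q$, $\hat c_d$, and $\hat c_{d'}$, every other point of $L$ and of $B$ being strictly positive. Applying Proposition \ref{Proposition:Separation} with $Q = S \setminus \{q\}$ yields that $q \in \ch(S \setminus \{q\})$ if and only if $q \in \ch\big((S \setminus \{q\}) \cap h\big) = \ch\big(\{\hat c_d, \hat c_{d'}\} \cap L'\big)$. Since $q$ is the midpoint of the two distinct points $\hat c_d$ and $\hat c_{d'}$, it can lie in that hull only when both endpoints belong to $L'$, and in that case Lemma \ref{Lemma:Encoding} gives $q \in \ch(L')$, contradicting the hypothesis. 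Hence no point of $S$ is redundant and $S$ is in convex position.

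I expect the only delicate step to be the blocking-point case, specifically confirming that the hyperplane from Lemma \ref{Lemma:Encoding} isolates exactly the three points $q$, $\hat c_d$, $\hat c_{d'}$ among all of $P$, so that Proposition \ref{Proposition:Separation} collapses membership in $\ch(S \setminus \{q\})$ to membership in the segment spanned by the two endpoints. The argument for the $L'$ points is then routine given Proposition \ref{Observation:Convex}.
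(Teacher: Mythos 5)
Your proof is correct and follows the route the paper intends: the paper states this corollary without proof, as an immediate consequence of Proposition \ref{Observation:Convex}, Lemma \ref{Lemma:Encoding}, and the separating hyperplane constructed in the proof of that lemma, and these are precisely the ingredients you use. Your treatment of the blocking-point case (noting that only $q$, $\hat c_d$, $\hat c_{d'}$ lie on the hyperplane $h$, then applying Proposition \ref{Proposition:Separation} to collapse membership in $\ch(S \setminus \{q\})$ to membership in the segment) fills in exactly the details the paper leaves implicit.
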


The main lemma then reads as follows:
\begin{lemma}\label{Lemma:Main} There is an independent set of size $m$ among the unit disks if and only if there are $m + |B|$ points in empty convex position.
\end{lemma}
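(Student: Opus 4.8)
The plan is to prove both implications by setting up an explicit correspondence between independent sets of disks and witnesses in empty convex position, where I always take the whole blocking set $B$. Write any candidate witness as $S = L' \uplus B'$ with $L' \subseteq L$ and $B' \subseteq B$; the two directions then reduce to controlling which lifted centers $L'$ we pick.

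For the forward direction, suppose $I$ is an independent set of $m$ disks and let $L_I = \{\hat c_d \mid d \in I\}$. I claim $S = L_I \uplus B$, which has exactly $m + |B|$ points, is in empty convex position. That $S$ is in convex position I would get from the corollary: it suffices that no blocking point lies in $\ch(L_I)$. By Lemma \ref{Lemma:Encoding}, $b_{dd'} \in \ch(L_I)$ would force $\hat c_d, \hat c_{d'} \in L_I$, i.e. $d, d' \in I$; but blocking points exist only for touching pairs, and $I$ is independent, so this cannot occur. For emptiness, the only candidate interior points are the unused lifts $L \setminus L_I$. Here I would invoke Proposition \ref{Observation:Convex} to get $\ch(S) \subseteq \ch(L \cup B) = \ch(L)$, together with the strict convexity of the paraboloid: every point of $L$ is an extreme point of $\ch(L)$, hence an extreme point of any subhull contained in $\ch(L)$ in which it lies. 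Thus a point of $L \setminus L_I$ inside $\ch(S)$ would have to be a vertex of $\ch(S)$, hence one of the generating points in $L_I \cup B$; but it lies on the paraboloid and is not a chosen lift, while the blocking points lie strictly below, a contradiction.

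For the converse, suppose $S = L' \uplus B'$ is in empty convex position with $|S| = m + |B|$. The key step is to show the disks underlying $L'$ are pairwise non-touching. If two touching disks $d, d'$ had both centers lifted into $L'$, then by Lemma \ref{Lemma:Encoding} the blocking point $b_{dd'}$ lies in $\ch(L') \subseteq \ch(S)$; if $b_{dd'} \in S$ this contradicts convex position, and if $b_{dd'} \notin S$ it contradicts emptiness. Hence the disks of $L'$ form an independent set of size $|L'|$. Since $|B'| \le |B|$, the cardinality constraint $|L'| + |B'| = m + |B|$ yields $|L'| \ge m$, and any $m$ of these disks give the desired independent set.

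I expect the main obstacle to be the \emph{emptiness} requirement rather than plain convex position: one must rule out that a subset of lifts together with all of $B$ swallows an unused lifted center. This is exactly where placing all points on (or strictly below) the strictly convex paraboloid pays off, via the extreme-point argument above. The other place demanding care is the case distinction in the converse on whether the offending blocking point $b_{dd'}$ belongs to $S$, since it is precisely there that the difference between convex position and empty convex position must be tracked.
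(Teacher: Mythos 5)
Your proof is correct and takes essentially the same route as the paper: the same witness $L_I \uplus B$ for the forward direction, and for the converse the same cardinality count $|L'| \ge m$ combined with Lemma \ref{Lemma:Encoding}; you merely spell out what the paper compresses, namely the emptiness argument (your extreme-point reasoning on the paraboloid is exactly what Proposition \ref{Observation:Convex} encapsulates) and the case distinction on whether the offending blocking point belongs to $S$, which the paper leaves implicit. One harmless slip: the blocking points lie strictly \emph{above} the paraboloid (a chord midpoint of a convex surface lies above it), not below, but your argument only uses that they do not lie \emph{on} it, so nothing breaks.
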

\begin{proof}
$\Rightarrow$: Let $I$, $|I| = m$, be an independent set among the set of disks. Let $\hat I \subset L$ denote the corresponding lifted centers. We claim that $S = \hat I \cup B$ is in empty convex position. Indeed, by Observation \ref{Observation:Convex}, no point of $L - \hat I$ is in the convex hull of $S$. Further, by Lemma \ref{Lemma:Encoding}, if some point $b \in B$ was in $\ch(S)$, this would mean that there are two points in $\hat I$ that contained $b$. Thus, the corresponding disks would touch, and $I$ would not be an independent set. This means that there are $m + |B|$ points in empty convex position.\\
$\Leftarrow$: Now assume that there is no independent set of size $m$. This means that for any choice of $m$ disks, two of them touch. Now take any set $S$ of $m + |B|$ points. As there are only $|L| + |B|$ points in total, this must contain at least $m$ points from $L$. Thus, some two of them belong to disks that intersect. By Lemma \ref{Lemma:Encoding}, their convex hull contains a point of $B$. Thus, $S$ is not in empty convex position.
\end{proof}

\section{Adaption to Erd\H os-Szekeres}\label{Sec:Adaption}
We now show how this reduction can be applied to \textsc{Erd\H os-Szekeres}. One direction of Lemma \ref{Lemma:Main} is clear, since we have shown how an independent set of size $m$ results in an empty convex set of size $m + |B|$. For the other direction, we need to show that if there is \emph{any} not necessarily empty convex set of $m + |B|$ points, then there is also an independent set of size $m$ among the disks.

\begin{lemma} There is an independent set of size $m$ among the unit disks if and only if there are $m + |B|$ points in convex position.
\end{lemma}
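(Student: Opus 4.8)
The plan is to reuse almost everything from the proof of Lemma~\ref{Lemma:Main} and to isolate the single new difficulty. The forward direction ($\Rightarrow$) is free: an independent set of size $m$ already yields $m+|B|$ points in \emph{empty} convex position by Lemma~\ref{Lemma:Main}, and empty convex position is a special case of convex position. So the whole content lies in the reverse direction, which I would prove by contraposition: assuming that the disks have no independent set of size $m$ (equivalently, that their intersection graph $G$ satisfies $\alpha(G) \le m-1$), I show that \emph{every} set $S \subseteq P$ with $|S| = m+|B|$ fails to be in convex position.

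First I would fix such an $S$ and split it as $S = L' \uplus B'$ with $L' \subseteq L$ and $B' \subseteq B$. Since $|B'| \le |B|$ we have $|L'| \ge m$; write $t := |L'|-m = |B|-|B'|$, so that exactly $t$ blocking points are \emph{missing} from $S$. By the Corollary it suffices to exhibit one point of $B'$ lying in $\ch(L')$, and by Lemma~\ref{Lemma:Encoding} this amounts to finding a touching pair of disks $d,d'$ whose centers both lie in $L'$ and whose blocking point $b_{dd'}$ happens to be present in $B'$.

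The heart of the argument is a counting step. Consider the induced subgraph $G[L']$ on the $m+t$ centers in $L'$; an independent set of $G[L']$ is also independent in $G$, so $\alpha(G[L']) \le m-1$. I would then invoke the elementary bound that a graph on $N$ vertices with independence number $\alpha$ has at least $N-\alpha$ edges: deleting one endpoint from each surviving edge greedily leaves an independent set, so at least $N-\alpha$ deletions, hence at least that many edges, were needed. Here this yields at least $(m+t)-(m-1) = t+1$ touching pairs inside $L'$. Each such pair corresponds to a \emph{distinct} blocking point of $B$ (this bijection, exactly as in Lemma~\ref{Lemma:Encoding}, is where non-overlap is used), while only $t$ blocking points are missing from $S$. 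By pigeonhole at least one of these $t+1$ internal touching pairs $d,d'$ therefore has $b_{dd'} \in B'$, so by Lemma~\ref{Lemma:Encoding} we get $b_{dd'} \in \ch(L') \subseteq \ch(S)$ with $b_{dd'} \in S$, contradicting convex position of $S$.

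The step I expect to be the real obstacle --- and the only place where this lemma genuinely differs from Lemma~\ref{Lemma:Main} --- is precisely this pigeonhole. In the empty-convex-set setting a blocking point inside $\ch(L')$ breaks the configuration even when it is not a member of $S$, so a single touching pair suffices; for plain convex position the offending blocking point must itself be one of the chosen points, and an adversary who omits blocking points could try to delete exactly those that would expose touching pairs within $L'$. The inequality $|E(G[L'])| \ge |L'|-\alpha(G[L']) \ge t+1$ is what defeats this strategy: omitting a blocking point forces an extra center into $L'$, so the adversary can hide at most $t$ witnesses but is compelled to create at least $t+1$ of them. I would double-check only the two routine facts this rests on, namely that blocking points are in bijection with touching pairs and that the greedy edge bound is applied to the correct induced subgraph $G[L']$.
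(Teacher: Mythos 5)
Your proof is correct, but it follows a genuinely different route from the paper's. The paper proves the reverse direction by an iterative exchange argument: given a convex set $S$ of size $m+|B|$ that misses some blocking point, it picks an intersecting pair $d,d'$ with both lifted centers in $S$, observes that $b_{dd'}\notin S$ (else $S$ would not be in convex position), and swaps one lifted center for $b_{dd'}$, increasing $|S\cap B|$; after finitely many exchanges the set contains all of $B$, and the $m$ remaining lifted centers must then be independent. You replace this iteration by a single counting step in contrapositive form: if $\alpha(G)\le m-1$, then the $m+t$ centers present in $S$ induce at least $t+1$ touching pairs (via the elementary bound that a graph on $N$ vertices with independence number $\alpha$ has at least $N-\alpha$ edges), these pairs yield $t+1$ \emph{distinct} blocking points (injectivity follows from Lemma~\ref{Lemma:Encoding}, as you note), and only $t$ blocking points are absent from $S$, so one offending blocking point lies in $S$ itself, contradicting convex position. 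Both arguments rest on the same two ingredients, Lemma~\ref{Lemma:Encoding} and the corollary characterizing convex position in $P$. Your version buys a non-iterative proof that makes the quantitative trade-off explicit (hiding a witness costs an extra center, which creates more witnesses than can be hidden), and it sidesteps a detail the paper leaves implicit, namely verifying that the exchanged set $S'$ is again in convex position, which requires one more appeal to Lemma~\ref{Lemma:Encoding}. The paper's exchange argument, in return, needs no graph-theoretic lemma and yields the structural byproduct that one may assume without loss of generality that a largest convex set contains all of $B$.
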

\begin{proof} $\Rightarrow$: An empty convex set is convex.\\
$\Leftarrow$: Let $S$  be a set of $m + |B|$ points in convex position with $|S \cap B| < |B|$. We show how to construct a set $S'$ in convex position of the same size such that $|S' \cap B| = |S \cap B| + 1$. 

Let $I = S \cap L$, and let $D_I$ be the corresponding set of disks. Observe that, if $|S \cap B| < |B|$, then $|I| > m$. If all disks from $D_I$ are independent, we are done. Otherwise, let $d$ and $d'$ be two disks from $D_I$ that intersect. The point $b_{dd'}$ cannot be part of $S$, for otherwise $S$ would not be in convex position. If we thus set $S' = I - \{ \hat d \} \cup B \cup \{ b_{dd'} \}$, the set is still in convex position and we have $|S'| = |S|$ and $|S' \cap B| > |S\cap B|$. Thus, after finitely many steps we end up with a set of $m + |B|$ points which contains all points from $B$. In particular, the set contains no point from $B$ in the convex hull. But this means that the disks corresponding to these $m$ points from $L$ do not intersect. Thus, we have an independent set of size $m$.
\end{proof}

This finishes the proof of Thm. \ref{Thm:Main}.

\section{Testing weak $\eps$-nets and red-blue discrepancy}\label{Sec:WeakEps}
Here we shortly mention that the hardness proofs also show hardness for two closely related problems. Recall that a range space is a pair $(X, \mathcal R)$, where $\mathcal R \subset 2^X$. If $X$ is a set of points in $\R^d$ and $\mathcal R$ is the set of all convex sets determined by them, a weak $\eps$-net for $(X, \mathcal R)$ is a set of points $S$ such that $|S \cap R| \neq \emptyset$ whenever $|R \cap X| \geq \eps |X|$, for all $R \in \mathcal R$. We then define the corresponding decision problem as follows:
\begin{definition} \textsc{($\eps$-Net-Verification)} Given a set of points $P \subset \R^d$, another set $S \subset \R^d$ and an $\eps > 0$. Is $S$ an $\eps$-net for $P$ with respect to all convex sets?
\end{definition}
Chazelle et al. \cite{CEEGGSW95} give a polynomial time algorithm for the problem in the plane and ask whether it is solvable in polynomial time in $\R^3$.

A closely related concept is that of red-blue discrepancy: Given a set $R$ of red and a set $B$ of blue points, the \emph{discrepancy} of a set $C$ is defined as $D(C) = \left| |R \cap C| - |B \cap C| \right|$. The discrepancy of the set $P = R \cup B$ is then defined as $D(P) = \max_{C} D(C)$. The corresponding decision problem \textsc{Red-Blue-Discrepancy} asks whether the discrepancy of a given set is at least some value $k \in \N$.

Now observe that the set of blocking points $B$ determines an $(m/n)$-net\footnote{Recall that $n$ was the number of unit disks.} for the set of lifted points $L$ if and only if there is \emph{no} independent set of size $m$ among the disks. A similar argument holds for \textsc{Red-Blue-Discrepancy}. Our proof then also shows the following:
\begin{theorem} The problem $\eps$-\textsc{Net-Verification} is co-NP-complete in $\R^3$ and \textsc{Red-Blue-Discrepancy} is NP-hard in $\R^3$.
\end{theorem}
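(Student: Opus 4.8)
The plan is to leverage the structure already established in the \textsc{LECS} and \textsc{ES} reductions, observing that the key quantity---whether the blocking points $B$ pierce all large convex ranges---is precisely what decides the existence of an independent set. Concretely, I would reuse the instance $P = L \uplus B$ built from a set $\cal D$ of $n$ non-overlapping unit disks, set $\eps = m/n$, and take $S = B$ as the candidate net. The reduction is already linear, so the only work is to certify the stated equivalences for the two range-space problems.

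First I would handle $\eps$-\textsc{Net-Verification}. The goal is to show that $B$ is an $(m/n)$-net for $L$ with respect to convex ranges if and only if there is \emph{no} independent set of size $m$ among the disks. For the contrapositive of one direction, suppose $I$ is an independent set with $|I| = m$; then $\hat I \subseteq L$ is a convex range of size $\geq \eps|L|$ (taking $|X| = |L| = n$) that, by Lemma \ref{Lemma:Encoding}, contains no blocking point, since no two disks in $I$ touch. Hence $B$ fails to stab $\hat I$ and is not a net. Conversely, if $B$ is not a net, there is a convex range $R$ with $|R \cap L| \geq m$ missing all of $B$; by Lemma \ref{Lemma:Encoding} no two of these $\geq m$ lifted centers can come from intersecting disks, so the corresponding disks form an independent set of size $m$. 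This gives co-NP-hardness; membership in co-NP follows because a violating range is witnessed by a subset of $\leq d+1 = 4$ points of $P$ whose convex hull is large and $B$-free, which is checkable in polynomial time, so the problem is co-NP-complete.

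For \textsc{Red-Blue-Discrepancy} I would color $L$ red and $B$ blue and analyze $D(C) = \bigl|\,|L \cap C| - |B \cap C|\,\bigr|$ over convex ranges $C$. The idea is that a large monochromatic-red convex range free of blue points produces high discrepancy, and Lemma \ref{Lemma:Encoding} again ties the absence of blue (blocking) points inside $\ch(L \cap C)$ to the independence of the corresponding disks. I would fix the target threshold $k$ in terms of $m$ and $|B|$ so that the maximum discrepancy reaches $k$ exactly when some convex range captures $m$ red points with no blue point inside, i.e.\ when an independent set of size $m$ exists; since one may always take $C$ to contain all of $B$ as well to balance or shift the count, the precise value of $k$ needs to be chosen to isolate the independent-set condition.

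The main obstacle I anticipate is this last calibration for \textsc{Red-Blue-Discrepancy}: unlike the net problem, where only the presence or absence of a single stabbing point matters, discrepancy is a signed count, so I must ensure the extremal range that maximizes $D(C)$ genuinely corresponds to a maximum \emph{empty} convex set of lifted centers rather than to some spurious configuration that gains discrepancy by including or excluding blue points cleverly. Proposition \ref{Observation:Convex} (that $B \subseteq \ch(L)$) should help here, since it constrains which blue points can be separated from a given red range, but pinning down the exact threshold $k$ and verifying that no other range beats it is where the careful bookkeeping lies.
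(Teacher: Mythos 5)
The $\eps$-net half of your proposal is correct and is precisely the argument the paper intends (the paper itself only states the observation): with $X = L$, $S = B$, and $\eps = m/n$, Lemma \ref{Lemma:Encoding} gives both directions of the equivalence exactly as you state them, so $\eps$-\textsc{Net-Verification} is co-NP-hard. Your membership argument, however, is broken: a violating range is \emph{not} witnessed by $d+1 = 4$ points, because the convex hull of four points of $P$ need not contain more than those four points of $X$, so it cannot certify the largeness condition $|R \cap X| \geq \eps |X|$ (think of $X$ in convex position). The correct certificate is the entire set $Q = R \cap X$: one checks $|Q| \geq \eps|X|$ and, by one linear program per point of $S$, that $\ch(Q) \cap S = \emptyset$. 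This is still polynomial size and polynomial time, so co-NP-completeness survives, but with a different witness.

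The genuine gap is in \textsc{Red-Blue-Discrepancy}, and the calibration problem you flag is real rather than mere bookkeeping: no threshold works for the unmodified construction. Color $L$ red and $B$ blue, and let $\alpha$ be the maximum number of pairwise non-touching disks. For any convex $C$, let $S$ be the set of disks whose lifted centers lie in $C$; every touching pair within $S$ forces its midpoint (a blue point) into $C$, so $|C \cap L| - |C \cap B| \leq |S| - t(S)$, where $t(S)$ is the number of touching pairs inside $S$, and deleting one disk per touching pair gives $|S| - t(S) \leq \alpha$. Conversely, $\ch(B)$ contains all of $B$ and no point of $L$ (points of $L$ lie on the paraboloid and are therefore extreme in $P$, cf.\ Proposition \ref{Observation:Convex}), so discrepancy $|B|$ is always attained. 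Hence the maximum discrepancy equals $\max(\alpha, |B|)$: choosing $k = m$ makes the reduction vacuous whenever $|B| \geq m$, and any $k > |B|$ tests $\alpha \geq k$ rather than $\alpha \geq m$. The repair is to change the instance, not to search for a cleverer threshold: add $|B|$ extra unit disks placed far from the construction, pairwise disjoint and touching nothing. This leaves the blue set unchanged and raises the independence number to exactly $\alpha + |B|$, so the maximum discrepancy becomes $\max(\alpha + |B|, |B|) = \alpha + |B|$, and $k = m + |B|$ now works: the discrepancy reaches $m + |B|$ if and only if the original disk instance has an independent set of size $m$. With this padding step (and the corrected co-NP witness) your outline becomes a complete proof; note that the paper elides both points with ``a similar argument holds.''
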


\section{Conclusion and open problems}\label{Sec:Conclusion}
This is work in progress (even though very little progress has been made in the past few weeks). In the future, we will try to extend the paper in the following direction.

The major open question is how to find an approximation algorithm for the problems \textsc{Erd\H os-Szekeres} and \textsc{Largest-Empty-Convex-Set}. The obvious approach (projecting to $\R^2$ and solving the problem there) does not work very well: as shown by Chazelle et al. \cite{CEG89}, there are polytopes whose projection in any direction has $\Theta(\log n)$ vertices on the convex hull. While this leads to a polynomial time $(\log n)/n$-approximation, only very few people will find this satisfying. Thus, the question for a more intelligent (probably constant-factor) approximation algorithm remains and seems to be very challenging.

In addition to this, the most interesting question is maybe the following: Is \textsc{Largest-Empty-Convex-Set} in $\R^3$ fixed parameter tractable with respect to the size of the solution? That is, can we decide whether there are $k$ points in empty convex position in time $O\left(f(k) \cdot n^c\right)$ for some computable function $f$ and constant $c$? More generally, given a point set $P$ in $\R^d$, can we decide whether there is an empty convex set of size $k$ in time $O\left(f(k)n^{O(d)}\right)$? 

Observe that due to the Erd\H os-Szekeres theorem itself, the problem \textsc{Erd\H os-Szekeres} is trivially fixed-parameter tractable: Given a point set $P$ and a $k \in \N$, if $n := |P| \leq 2^k$, we use a brute force algorithm, i.e., simply try all subsets of size $k$. This takes time ${n \choose k} \approx n^k \leq (2^k)^k$. If $n > 2^k$, we simply answer yes. In any case, the running time is bounded by $2^{k^2}$, and thus we have an algorithm with running time $O\left(f(k)n\right)$. Still, the question for a polynomial size problem kernel remains.

\paragraph*{Acknowledgements} We thank G\" unter Rote and Nabil Mustafa for pointing us out to \cite{CEEGGSW95} and \cite{CEG89}, respectively.

\bibliography{EmptyConvexSet}

\end{document}